\newtheorem{theorem}{Theorem}
\newtheorem{definition}[theorem]{Definition}
\newtheorem{proposition}[theorem]{Proposition}
\begin{document}
\title{Coherence measure: Logarithmic coherence number}
\author{Zhengjun Xi}
\email{xizhengjun@snnu.edu.cn}
\affiliation{College of Computer Science, Shaanxi Normal University, Xi'an 710062,
China}

\author{Shanshan Yuwen}
\affiliation{College of Computer Science, Shaanxi Normal University, Xi'an 710062,
China}
\date{\today}

\begin{abstract}
We introduce a measure of coherence, which is extended from the coherence rank via the standard convex roof construction, we call it the logarithmic coherence number.
This approach is parallel to the Schmidt measure in entanglement theory, We study some interesting properties of the logarithmic coherence number, and show that this quantifier can be considered as a proper coherence measure. We also find that the logarithmic coherence number can be calculated exactly for a large class of states. We give the relationship between coherence and entanglement in bipartite system, and our results are generalized to multipartite setting. Finally, we find that the creation of entanglement
with bipartite incoherent operations is bounded by the logarithmic coherence number of the initial system during the process.
\end{abstract}
\maketitle
\section{Introduction}
The fundamental that distinguish quantum states from classical states is quantum coherence, which is the most basic characteristic of quantum mechanics.
Quantum coherence plays an important role in the study of quantum information and quantum multipartite systems.
Baumgratz $et$ $al$ proposed a theoretical framework for quantitative study of quantum coherence from the perspective of resource theory~\cite{BaumgratzPRL14}. Various ways have been builded to develop the resource-theoretic framework for
understanding quantum coherence, we refer to~\cite{StreltsovRMP17,Hu17} for more discussions of resource theory of coherence.

Analogously to the Schmidt rank in entanglement theory~\cite{TerhalPRAR01,EisertPRA01}, Killoran $et$ $al$ presented a framework for the conversion of
nonclassicality (including coherence) into entanglement, they
introduced a concept of the coherence rank~\cite{KilloranPRL16}. A concept related to the coherence rank was also discussed by Levi and Mintert~\cite{LeviNJP14}.
Soon afterwards, Chin introduced a discrete coherence monotone named the coherence number, which is a generalization of the
coherence rank to mixed states~\cite{ChinPRA17}. Regula $et$ $al$ also discussed coherence number of mixed states, they presented a general formalism for the conversion of nonclassicality into multipartite
entanglement~\cite{RegulaNJP18}. Theurer $et$ $al$ employed a natural generalization of the coherent rank to
superposition with respect to a finite number of linear independent basis~\cite{TheurerPRL17}.
The coherence number is proved to be a discrete coherence monotone, but it is not a proper coherence measure because it does not satisfy convexity~\cite{ChinPRA17,ZhaoPRL18}. To resolve this issue,
in this paper, we try to extend the coherence rank to mixed states via the standard convex roof construction,
this approach is parallel to the Schmidt measure in~\cite{EisertPRA01}. We can prove that it is not only a coherence monotone but also a proper coherence measure.

The paper is organized as follows. In Sec.~\ref{sec:basic}, we review some basic concepts about the resource theory of coherence. In Sec.~\ref{sec:defination}, we discuss the coherence rank and give a new property about it.  In Sec.~\ref{sec:properties}, we introduce a new coherence measure, which is the so-called the logarithmic coherence number. Some interesting properties are given. In Sec.~\ref{sec:multi scen}, we focus on the relationship between coherence and entanglement in bipartite and multipartite settings. In Sec.~\ref{sec:entanglement}, we discuss how the interplay between coherence consumption and
creation of entanglement. We summarizes our results in Sec.~\ref{sec:conclusion}.

\section{Basic concepts of coherence measure}\label{sec:basic}
We introduce some concepts about coherence measure that can be used for our main results~\cite{BaumgratzPRL14,StreltsovRMP17,Hu17}.
Given a $d$-dimension Hilbert space $\mathcal{H}$ with a fixed orthogonal basis $\mathcal{O}=\{|i\rangle\}_{i=0}^{d-1}$, we denote the set of all density operators acting on $\mathcal{H}$ by $\mathcal{D}(\mathcal{H})$. The density operators which are diagonal in this fixed basis are called incoherent, we denote the set of all incoherent states by $\mathcal{I}$, and $\mathcal{I}\subset\mathcal{D}(\mathcal{H})$. Any incoherent state $\delta$ is of the form
\begin{equation}
\delta=\sum_{i=0}^{d-1}\delta_i|i\rangle\langle i|,
\end{equation}
where $\delta_i$ are probability distribution.
Any state which cannot be written as above form is defined as a coherent state,
which means the coherence is basis-dependent.

The incoherent operation is to map the incoherent states to incoherent states. The definition of incoherent operation is not unique and different choices~\cite{StreltsovRMP17}.
In this paper, we only consider the incoherent operation in~\cite{BaumgratzPRL14}. The incoherent operation (IO) is a completely positive and trace preserving (CPTP) maps $\Lambda$ that admit a Kraus operator representation
\begin{equation}
\Lambda(\rho)=\sum_nK_n\rho K_n^\dagger,
\end{equation}
where all the Kraus operators $K_n$ must satisfy $K_n \mathcal{I} K_n^\dagger\subseteq\mathcal{I}$ with $\sum_n K_n^{\dagger}K_n= I$. In general, the Kraus operator can always be represented as
\begin{equation}
K_{n}=\sum_{i}c_i|f(i)\rangle\langle i|,
\end{equation}
where $f$ is a function in the index set and $c_i \in [0,1]$~\cite{WinterPRL16}.

Baumgratz $et$ $al$ proposed that any proper measure of the coherence $\mathcal{C}$ must satisfy the following conditions~\cite{BaumgratzPRL14}:
\begin{itemize}
\item [{(C1)}]  $Nonnegativity:$ $\mathcal{C}(\rho)\geq 0$ for all quantum states $\rho$, and $\mathcal{C}(\rho)=0$ if and only if $\rho$ is incoherent.
\item [{(C2)}]  $Monotonicity:$ $\mathcal{C}(\rho)$ is non-increasing under incoherent operation $\Lambda$, i.e.,
$\mathcal{C}(\rho) \geq \mathcal{C}(\Lambda(\rho))$.
\item [{(C3)}]  $Strong$ $monotonicity\!:$ $\mathcal{C}(\rho)$ does not increase on average under selective incoherent operations, i.e., $\sum_n q_n\mathcal{C}(\rho_n)\leq\mathcal{C}(\rho)$, where $\rho_n=K_n\rho K_n^\dag/q_n$, and $q_n=\mathrm{Tr}(K_n \rho K_n^\dag)$.
\item [{(C4)}]  $Convexity:$ $\mathcal{C}(\rho)$ is a convex function of quantum states, i.e.,
$\sum_ip_i \mathcal{C}(\rho_i)\geq \mathcal{C}(\sum_ip_i\rho_i),$
 for any ensemble $\{p_i,\rho_i\}$.
\end{itemize}
Following standard notions from entanglement theory, we call a quantifier $\mathcal{C}$ which fulfills conditions (C1) and either condition (C2) or (C3)~(or both) a coherence monotone. A quantifier $\mathcal{C}$ is further called a coherence measure if it satisfies the four conditions:~(C1)-(C4). We also know that conditions (C3) and (C4) automatically imply condition (C2)~\cite{StreltsovRMP17}.

\section{Coherence rank}\label{sec:defination}
For a pure state on Hilbert space $\mathcal{H}$ with the fixed orthogonal basis $\mathcal{O}$, one can define the coherence
rank
\begin{equation}\label{equ:ECS}
R_C(|\psi\rangle)=\min\left\{ |\hat{\mathcal{O}}| \mid|\psi\rangle=\sum_{|j\rangle\in\hat{\mathcal{O}}}\lambda_{j}|j\rangle, \hat{\mathcal{O}}\subseteq\mathcal{O}\right\},
\end{equation}
 where $\lambda_{j}$ are nonzero complex coefficients.

We note that the coherence
rank given in Eq.~(\ref{equ:ECS}) characterizes the minimal number of the incoherent states in the fixed orthogonal basis $\mathcal{O}$ in such a decomposition of $|\psi\rangle$. This is also equivalent to the fact that the coherence rank $R_C(|\psi\rangle)=k$ if exactly $k$ of the coefficients $\lambda_{j}$ are nonzero. Thus, we say that the definition of the coherence rank given in
Eq.~(\ref{equ:ECS}) is equivalent to the definition that was introduced in Refs.~\cite{KilloranPRL16,TheurerPRL17,ChinPRA17,LeviNJP14,RegulaNJP18}.
Clearly, we have $1\leq R_C(|\psi\rangle)\leq d$ and all coherent pure states should have $R_C(|\psi\rangle)\geq 2$. We know that the coherence
rank is non-increasing under incoherent operations $\Lambda$, that is,
\begin{equation}\label{equ:rkc}
R_{C}(\Lambda(|\psi\rangle))\leq R_{C}(|\psi\rangle).
\end{equation}
In particular, following the results in~\cite{KilloranPRL16,WinterPRL16}, we know that there exists a unitary incoherent operation $U_{\mathrm{in}}$ on a pure state $|\psi\rangle$ such that the coherence rank of $U_{\mathrm{in}}|\psi\rangle$ is equal to the coherence rank of $|\psi\rangle$, i.e.,
\begin{equation}
R_C(U_{\mathrm{in}}|\psi\rangle)=R_C(|\psi\rangle),
\end{equation}
where $U_{\mathrm{in}}=\sum_{j}e^{i\theta_j}|j\rangle\langle j|$ with some phases $\theta_j$.
Therefore, we say that the coherence rank is a coherence monotone.

We also consider the coherence rank of superposition of two coherent states. The following result will give the lower and upper bounds of the coherence of superposition.
\begin{proposition}
Let $|\phi\rangle=a|\psi\rangle + b|\varphi\rangle$ with $|a|^2+|b|^2=1$, we have
\begin{equation}
|R_{C}(|\psi\rangle)\!-\!R_{C}(|\varphi\rangle)|\!\leq\!R_{C}(|\phi\rangle)\!\leq\! R_{C}(|\psi\rangle)\!+\!R_{C}(|\varphi\rangle).
\end{equation}
\end{proposition}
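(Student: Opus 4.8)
The plan is to reduce everything to the elementary observation, recorded in the discussion following Eq.~(\ref{equ:ECS}), that $R_{C}(|\psi\rangle)$ is nothing but the number of basis vectors appearing with nonzero coefficient in the (unique) expansion of $|\psi\rangle$ over the fixed basis $\mathcal{O}$; equivalently $R_{C}(|\psi\rangle)=|\mathrm{supp}(|\psi\rangle)|$, where $\mathrm{supp}(|\psi\rangle)=\{\,|j\rangle\in\mathcal{O}:\langle j|\psi\rangle\neq 0\,\}$. Once this is in hand, the proposition is a statement about how supports behave under linear combinations, and the two inequalities follow from set counting.

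For the upper bound, I would fix optimal decompositions $|\psi\rangle=\sum_{|j\rangle\in\hat{\mathcal{O}}_{\psi}}\lambda_{j}|j\rangle$ and $|\varphi\rangle=\sum_{|j\rangle\in\hat{\mathcal{O}}_{\varphi}}\mu_{j}|j\rangle$ realizing the minima in Eq.~(\ref{equ:ECS}), so that $|\hat{\mathcal{O}}_{\psi}|=R_{C}(|\psi\rangle)$ and $|\hat{\mathcal{O}}_{\varphi}|=R_{C}(|\varphi\rangle)$. Then $|\phi\rangle=a|\psi\rangle+b|\varphi\rangle=\sum_{|j\rangle\in\hat{\mathcal{O}}_{\psi}\cup\hat{\mathcal{O}}_{\varphi}}(a\lambda_{j}+b\mu_{j})|j\rangle$ is a (generally non-optimal) incoherent decomposition of $|\phi\rangle$ supported on $\hat{\mathcal{O}}_{\psi}\cup\hat{\mathcal{O}}_{\varphi}$, so $R_{C}(|\phi\rangle)\leq|\hat{\mathcal{O}}_{\psi}\cup\hat{\mathcal{O}}_{\varphi}|\leq|\hat{\mathcal{O}}_{\psi}|+|\hat{\mathcal{O}}_{\varphi}|=R_{C}(|\psi\rangle)+R_{C}(|\varphi\rangle)$.

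For the lower bound I would assume $a\neq 0$ and $b\neq 0$ (the degenerate endpoints are addressed below) and simply invert the relation: $|\psi\rangle=a^{-1}|\phi\rangle-a^{-1}b\,|\varphi\rangle$ exhibits $|\psi\rangle$ itself as a superposition of $|\phi\rangle$ and $|\varphi\rangle$, so the upper bound already proved, applied to this superposition, gives $R_{C}(|\psi\rangle)\leq R_{C}(|\phi\rangle)+R_{C}(|\varphi\rangle)$, i.e. $R_{C}(|\phi\rangle)\geq R_{C}(|\psi\rangle)-R_{C}(|\varphi\rangle)$. Interchanging the roles of $|\psi\rangle$ and $|\varphi\rangle$ (permissible since $b\neq 0$) yields $R_{C}(|\phi\rangle)\geq R_{C}(|\varphi\rangle)-R_{C}(|\psi\rangle)$, and the two together give $R_{C}(|\phi\rangle)\geq|R_{C}(|\psi\rangle)-R_{C}(|\varphi\rangle)|$.

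I do not expect a serious obstacle here; the only points requiring care are bookkeeping ones. First, on indices $j\in\hat{\mathcal{O}}_{\psi}\cap\hat{\mathcal{O}}_{\varphi}$ the coefficient $a\lambda_{j}+b\mu_{j}$ may vanish, so $\mathrm{supp}(|\phi\rangle)$ can be strictly smaller than $\hat{\mathcal{O}}_{\psi}\cup\hat{\mathcal{O}}_{\varphi}$, but this only strengthens the upper-bound inequality and so causes no trouble. Second, one should note that the claimed inequalities are meant for a genuine superposition, $ab\neq 0$: if, say, $b=0$ then $|\phi\rangle$ equals $|\psi\rangle$ up to a phase and the lower bound can fail, so this case must be explicitly excluded (or the normalization hypothesis read as tacitly assuming $ab\neq 0$). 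Beyond flagging these, the proof is purely combinatorial.
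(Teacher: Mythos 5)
Your proof is correct, and the upper bound is essentially the paper's argument: expand both states over their supports and note that $|\phi\rangle$ is supported on $\hat{\mathcal{O}}_{\psi}\cup\hat{\mathcal{O}}_{\varphi}$ (the paper merely splits this into the disjoint and overlapping cases, while you also correctly observe that cancellations on the overlap only help). For the lower bound, however, you take a genuinely different and arguably cleaner route: you invert the relation, writing $|\psi\rangle=a^{-1}|\phi\rangle-a^{-1}b|\varphi\rangle$, and reuse the already-proved upper bound, then symmetrize in $|\psi\rangle$ and $|\varphi\rangle$. The paper instead argues the lower bound by direct support counting, and only in the case where one support is contained in the other ($\hat{\mathcal{O}}_{\psi}\subseteq\hat{\mathcal{O}}_{\varphi}$ or vice versa); your inversion trick covers all support configurations uniformly with no case analysis, so in that respect it is more complete than the published argument. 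One small point worth making explicit in your write-up: the inverted expression does not have coefficients satisfying $|a^{-1}|^{2}+|a^{-1}b|^{2}=1$, so you should note that your upper-bound proof never used the normalization of the coefficients, only linearity of the expansion, which is indeed the case. Your remark that the lower bound tacitly requires $ab\neq 0$ is a legitimate caveat that the paper also glosses over (its Case 1 equality and Case 3 counting both implicitly assume nonvanishing $a$ and $b$).
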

\begin{proof}
By the definition of the coherence rank, there exist two sets $\hat{\mathcal{O}}_{\psi}$ and $\hat{\mathcal{O}}_{\varphi}$ such that
$R_{C}(|\psi\rangle)=|\hat{\mathcal{O}}_{\psi}|$, $R_{C}(|\varphi\rangle)=|\hat{\mathcal{O}}_{\varphi}|$, and one has
\begin{equation}
|\psi\rangle=\sum_{|j\rangle\in\hat{\mathcal{O}}_{\psi}}\psi_{j}|j\rangle,\ \ |\varphi\rangle=\sum_{|k\rangle\in\hat{\mathcal{O}}_{\varphi}}\varphi_{k}|k\rangle.
\end{equation}
Then, we will consider three cases as follows.

\emph{Case 1.} If $\hat{\mathcal{O}}_{\psi}\perp\hat{\mathcal{O}}_{\varphi}$, by definition, we directly obtain
\begin{equation}
R_{C}(\phi\rangle)= R_{C}(|\psi\rangle)+R_{C}(|\varphi\rangle).
\end{equation}

\emph{Case 2.} If $\hat{\mathcal{O}}_{\psi}\cap\hat{\mathcal{O}}_{\varphi}\neq \varnothing$, without loss of generality, we take $\tilde{\mathcal{O}}=\hat{\mathcal{O}}_{\psi}\cap\hat{\mathcal{O}}_{\varphi}$, and
\begin{equation}
|\phi\rangle\!=\!a\!\sum_{|j\rangle\in\hat{\mathcal{O}}_{\psi}\backslash\tilde{\mathcal{O}}}\psi_{j}|j\rangle+ \sum_{|j\rangle\in\tilde{\mathcal{O}}}(a\psi_{j}+b\varphi_{j})|j\rangle+b\sum_{|k\rangle\in \hat{\mathcal{O}}_{\varphi}\backslash \tilde{\mathcal{O}}}\varphi_{k}|k\rangle.
\end{equation}
Then, we have
\begin{align}
R_{C}(|\phi\rangle)\leq&\  |\hat{\mathcal{O}}_{\psi}\backslash \tilde{\mathcal{O}}|+|\hat{\mathcal{O}}_{\varphi}\backslash \tilde{\mathcal{O}}|+|\tilde{\mathcal{O}}|\nonumber\\
=&\ |\hat{\mathcal{O}}_{\psi}|+|\hat{\mathcal{O}}_{\varphi}|-|\tilde{\mathcal{O}}|\nonumber\\
\leq&\ R_{C}(|\psi\rangle)+R_{C}(|\varphi\rangle).
\end{align}

\emph{Case 3.} If $\hat{\mathcal{O}}_{\psi}\subseteq \hat{\mathcal{O}}_{\varphi}$, then we have
\begin{equation}
|\phi\rangle=\sum_{|j\rangle\in \hat{\mathcal{O}}_{\psi}}(a\psi_{j}+b\varphi_{j})|j\rangle+
b\sum_{|k\rangle\in\hat{\mathcal{O}}_{\varphi}\backslash\hat{\mathcal{O}}_{\psi}}\varphi_{k}|k\rangle,
\end{equation}
By the definition, we obtain
\begin{equation}
R_{C}(|\phi\rangle)\geq R_{C}(|\varphi\rangle)-R_{C}(|\psi\rangle).
\end{equation}
Similarly, If $\hat{\mathcal{O}}_{\varphi}\subseteq\hat{\mathcal{O}}_{\psi}$, we have
\begin{equation}
R_{C}(|\phi\rangle)\geq R_{C}(|\psi\rangle)-R_{C}(|\varphi\rangle).
\end{equation}
Thus, we obtain our desired result.
\end{proof}
The coherence rank has been generalized to mixed states in~\cite{ChinPRA17,KilloranPRL16,RegulaNJP18}, it is the so-called coherence number, which is defined as
 \begin{equation}
R_{C}(\rho)=\min_{\{p_i,|\psi_i\rangle\}}\max_i[R_{C}(|\psi_i\rangle)].
\end{equation}
The coherence number is the smallest possible maximal coherence rank in any decomposition of the mixed states, and for pure states the coherence rank equals the coherence number.
The coherence number only satisfies condition (C1), (C2) and (C3), but it doesn't satisfy condition (C4)~\cite{ChinPRA17,ZhaoPRL18}, so it is only a coherence monotone. In the following
section, we apply the standard convex roof construction to the mixed states.

\section{Logarithmic coherence number}\label{sec:properties}
In this section, we can define logarithmic coherence rank,
it is in the same way as the Schmidt rank in~\cite{EisertPRA01}.  Note that Theurer $et$ $al$ used this way to consider the superposition in~\cite{TheurerPRL17}.
\begin{definition}
For any pure state $|\psi\rangle$ on $\mathcal{H}$, the logarithmic coherence rank is defined as
\begin{equation}\label{Def:LCR}
\mathcal{L}_{C}(|\psi\rangle)=\log_{2}R_{C}(|\psi\rangle).
\end{equation}
\end{definition}
Obviously, the logarithmic coherence rank inherits some properties of coherence rank. The logarithmic coherence rank is non-negative, that is, $\mathcal{L}_C(|\psi\rangle)\geq 0$ for any pure state $|\psi\rangle$. In particular, for the maximally coherent states
\begin{equation}\label{eq:MCS}
|\psi_M\rangle=\frac{1}{\sqrt{d}}\sum_{j=0}^{d-1}e^{i\theta_j}|j\rangle,
\end{equation}
we have
\begin{equation}
\mathcal{L}_C(|\psi_M\rangle)=\log_2d.
\end{equation}
In addition, we find that the logarithmic coherence rank is also monotone, unitarily invariant and so on.
The logarithmic coherence rank can be extended to mixed states by the standard convex roof construction.
\begin{definition}\label{equ:definition 1}
For any mixed state $\rho$, the logarithmic coherence number is defined as
\begin{equation}
\mathcal{L}_{C}(\rho)=\min_{\{p_i,|\psi_i\rangle\}}\sum_ip_i\mathcal{L}_{C}(|\psi_i\rangle),\label{eq:LCN}
\end{equation}
where the minimum is taken over all pure state decompositions of $\rho=\sum_ip_i|\psi_i\rangle\langle\psi_i|$.
\end{definition}
In the subsequent paragraphs we will show that
the logarithmic coherence number is a proper coherence measure in the sense of Refs.~\cite{BaumgratzPRL14,StreltsovRMP17}.
\begin{proposition}
The logarithmic coherence number ${\mathcal{L}}_{C}$ is a coherence measure, which satisfies the conditions (C1)-(C4).
\end{proposition}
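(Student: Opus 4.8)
The plan is to verify the four axioms (C1)--(C4) in turn, noting that (C1) and (C4) are essentially immediate from the convex-roof structure of Definition~\ref{equ:definition 1}, that (C3) is the substantive step, and that (C2) then follows from (C3) and (C4) by the standard implication recalled in Sec.~\ref{sec:basic}.

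For (C1), since $R_C(|\psi\rangle)\geq 1$ for every pure state we have $\mathcal{L}_C(|\psi\rangle)=\log_2 R_C(|\psi\rangle)\geq 0$, and hence $\mathcal{L}_C(\rho)\geq 0$ because it is an average of nonnegative numbers. Moreover $\mathcal{L}_C(\rho)=0$ holds precisely when $\rho$ admits a pure-state decomposition in which every $|\psi_i\rangle$ satisfies $R_C(|\psi_i\rangle)=1$, i.e. is proportional to some basis vector $|k_i\rangle$; then $\rho=\sum_ip_i|k_i\rangle\langle k_i|$ is diagonal, so $\rho\in\mathcal{I}$, and the converse direction is obvious from $\delta=\sum_i\delta_i|i\rangle\langle i|$. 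For (C4), given $\rho=\sum_k t_k\sigma_k$ together with (near-)optimal pure-state decompositions $\sigma_k=\sum_i p^{(k)}_i|\psi^{(k)}_i\rangle\langle\psi^{(k)}_i|$, the collection $\{t_k p^{(k)}_i,|\psi^{(k)}_i\rangle\}$ is a valid pure-state decomposition of $\rho$, whence $\mathcal{L}_C(\rho)\leq \sum_{k,i} t_k p^{(k)}_i\mathcal{L}_C(|\psi^{(k)}_i\rangle)=\sum_k t_k\mathcal{L}_C(\sigma_k)$, which is exactly convexity.

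The core is (C3). First I would settle the pure-state case: if $|\psi\rangle$ is pure and $\{K_n\}$ are incoherent Kraus operators, with $q_n=\langle\psi|K_n^\dagger K_n|\psi\rangle$ and $|\psi_n\rangle=K_n|\psi\rangle/\sqrt{q_n}$, then writing $K_n=\sum_i c_i|f(i)\rangle\langle i|$ shows that $K_n|\psi\rangle$ is supported on at most $R_C(|\psi\rangle)$ basis vectors, so $R_C(|\psi_n\rangle)\leq R_C(|\psi\rangle)$ (this is Eq.~(\ref{equ:rkc}) applied branchwise), hence $\mathcal{L}_C(|\psi_n\rangle)\leq\mathcal{L}_C(|\psi\rangle)$ whenever $q_n>0$; summing against the $q_n$, which sum to $1$, gives $\sum_n q_n\mathcal{L}_C(|\psi_n\rangle)\leq\mathcal{L}_C(|\psi\rangle)$. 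To lift this to a mixed $\rho$, take an optimal decomposition $\rho=\sum_ip_i|\psi_i\rangle\langle\psi_i|$, set $p_{n|i}=\langle\psi_i|K_n^\dagger K_n|\psi_i\rangle$ and $|\psi_{in}\rangle=K_n|\psi_i\rangle/\sqrt{p_{n|i}}$ (retaining only the indices with $p_{n|i}>0$); then $q_n=\sum_ip_i p_{n|i}$ and $\rho_n=\sum_i (p_i p_{n|i}/q_n)|\psi_{in}\rangle\langle\psi_{in}|$ is a legitimate decomposition of $\rho_n$, so by definition $\mathcal{L}_C(\rho_n)\leq\sum_i(p_ip_{n|i}/q_n)\mathcal{L}_C(|\psi_{in}\rangle)\leq\sum_i(p_ip_{n|i}/q_n)\mathcal{L}_C(|\psi_i\rangle)$ by the pure-state bound. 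Multiplying by $q_n$, summing over $n$, and using $\sum_n p_{n|i}=\langle\psi_i|\psi_i\rangle=1$ yields $\sum_n q_n\mathcal{L}_C(\rho_n)\leq\sum_i p_i\mathcal{L}_C(|\psi_i\rangle)=\mathcal{L}_C(\rho)$, which is (C3). Finally (C2) follows from (C3) and (C4).

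The steps are all routine; the only points needing care are bookkeeping ones — discarding the vanishing branches $q_n=0$ and $p_{n|i}=0$ so that every normalized vector is well defined, checking $q_n=\sum_i p_i p_{n|i}$ and $\sum_n p_{n|i}=1$, and noting that the convex-roof infimum in Eq.~(\ref{eq:LCN}) is attained, by a Carathéodory-type compactness argument in fixed dimension, so that ``optimal decomposition'' is meaningful (otherwise the same inequalities go through verbatim with $\varepsilon$-optimal decompositions). So the ``hard part'' is really just organizing the refinement $\rho\mapsto\rho_n$ correctly; there is no genuine analytic obstacle.
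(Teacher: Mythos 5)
Your proof is correct and follows essentially the same route as the paper: (C1) and (C4) come directly from the convex-roof construction, (C3) is obtained by acting with incoherent Kraus operators on an optimal decomposition and using that the coherence rank cannot increase, and (C2) follows from (C3) together with (C4). Your bookkeeping is in fact slightly tidier than the paper's: you normalize each branch by the per-state probability $p_{n|i}=\langle\psi_i|K_n^\dagger K_n|\psi_i\rangle$ and only conclude after summing over $n$ using $\sum_n p_{n|i}=1$, whereas the paper normalizes every $K_n|\psi_i\rangle$ by the global $\sqrt{q_n}$ and asserts the bound branch by branch, which is the same argument in the subnormalized-state convention of Theurer \emph{et al.}
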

\begin{proof}
Obviously, condition (C1) follows immediately from the definition.

To show that $\mathcal{L}_C$ satisfies condition (C3), let $\rho=\sum_ip_i|\psi_i\rangle\langle\psi_i|$ be the optimal decomposition of $\rho$ belonging to the minimum in Eq.~(\ref{eq:LCN}), and we take the mark~\cite{TheurerPRL17}, and define
\begin{equation}
|\hat{\psi}_{i,n}\rangle=\frac{K_n|\psi_i\rangle}{\sqrt{q_n}},
\end{equation}
where $q_n=\mathrm{Tr}(K_{n}^{\dag}K_{n}\rho)$, and $K_n$ are incoherent Kraus operators. Then, every final state $\rho_n$ in an incoherent Kraus operator $K_n$  can be represented by
 \begin{equation}
\rho_n=\frac{K_{n}\rho K_{n}^{\dag}}{q_n}=\sum_ip_i|\hat{\psi}_{i,n}\rangle\langle\hat{\psi}_{i,n}|.
 \end{equation}
Since the coherence rank can never increase under
the action of an incoherent Kraus operator, then we have
\begin{align}
\mathcal{L}_{C}(\rho_n)&\leq\sum_ip_i\mathcal{L}_{C}(|\hat{\psi}_{i,n}\rangle)\nonumber\\
&\leq\sum_ip_i\mathcal{L}_{C}(|\psi_i\rangle)\nonumber\\
&=\mathcal{L}_{C}(\rho).
\end{align}
Thus, we have
\begin{equation}
\sum_nq_n\mathcal{L}_{C}(\rho_n)\leq\mathcal{L}_{C}(\rho).
\end{equation}

To show (C4) we take
\begin{equation}
\rho=\lambda_1\rho_{1}+\lambda_2\rho_{2},
\end{equation}
where $\lambda_1, \lambda_2\in[0,1]$.
Let $\rho_1=\sum_j\mu_j|\phi_j\rangle\langle\phi_j|$ and $\rho_2=\sum_k\eta_k|\varphi_k\rangle\langle\varphi_k|$ be the two decompositions for which the respective minima in Eq.~(\ref{eq:LCN}) are attained. Then the convex combinations $\lambda_1\sum_j\mu_j|\phi_j\rangle\langle\phi_j|+\lambda_2\sum_k\eta_k|\varphi_k\rangle\langle\varphi_k|$ is a valid decomposition of $\rho$, but it is not necessarily the optimal one. Thus, we have
\begin{align}
\mathcal{L}_C(\lambda_1 \rho_{1}\!+\!\lambda_2\rho_{2})
 \!\leq&\lambda_1\sum_{j}\mu_{j}\mathcal{L}_C(|\phi_{j}\rangle)\!+\!
 \lambda_2\sum_{k}\eta_{k}\mathcal{L}_C(|\varphi_{k}\rangle)\nonumber\\
\!\leq&\lambda_1\mathcal{L}_C(\rho_1)+\lambda_2\mathcal{L}_C(\rho_2).
 \end{align}

We know that the condition (C2) can be derived from conditions (C3) and (C4),
so we say the logarithmic coherence number $\mathcal{L}_{C}$ satisfies conditions (C1)-(C4).
\end{proof}

This shows that the logarithmic coherence number can indeed be used as a coherence measure quantifying the coherence of a quantum system.
Not just these nice properties, we also find that the logarithmic coherence number is additive as follows.
\begin{proposition}\label{prop:additive}
The logarithmic coherence number $\mathcal{L}_{C}$ is additive.
\end{proposition}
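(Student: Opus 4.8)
The claim to prove is that $\mathcal{L}_{C}(\rho\otimes\sigma)=\mathcal{L}_{C}(\rho)+\mathcal{L}_{C}(\sigma)$ for $\rho$ on $\mathcal{H}_{A}$ and $\sigma$ on $\mathcal{H}_{B}$, with the composite incoherent basis taken to be the product basis $\{|i\rangle_{A}\otimes|j\rangle_{B}\}$. The plan is: first dispatch the pure-state case, then prove ``$\leq$'' and ``$\geq$'' separately. For pure states one only needs that the coherence rank of a pure vector equals the number of nonzero amplitudes in the (here, product) basis; since $\psi_{i}\phi_{j}\neq0$ exactly when $\psi_{i}\neq0$ and $\phi_{j}\neq0$, we get $R_{C}(|\psi\rangle\otimes|\phi\rangle)=R_{C}(|\psi\rangle)R_{C}(|\phi\rangle)$, and applying $\log_{2}$ gives $\mathcal{L}_{C}(|\psi\rangle\otimes|\phi\rangle)=\mathcal{L}_{C}(|\psi\rangle)+\mathcal{L}_{C}(|\phi\rangle)$.

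The inequality ``$\leq$'' (subadditivity) is immediate: if $\rho=\sum_{i}p_{i}|\psi_{i}\rangle\langle\psi_{i}|$ and $\sigma=\sum_{j}q_{j}|\phi_{j}\rangle\langle\phi_{j}|$ are decompositions attaining the minima in \eqref{eq:LCN}, then $\{p_{i}q_{j},|\psi_{i}\rangle\otimes|\phi_{j}\rangle\}$ is a valid pure decomposition of $\rho\otimes\sigma$, so by the minimum defining $\mathcal{L}_{C}$ and the pure-state additivity just shown, $\mathcal{L}_{C}(\rho\otimes\sigma)\leq\sum_{i,j}p_{i}q_{j}(\mathcal{L}_{C}(|\psi_{i}\rangle)+\mathcal{L}_{C}(|\phi_{j}\rangle))=\mathcal{L}_{C}(\rho)+\mathcal{L}_{C}(\sigma)$.

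The inequality ``$\geq$'' (superadditivity) carries the real content. I would fix an optimal decomposition $\rho\otimes\sigma=\sum_{k}r_{k}|\chi_{k}\rangle\langle\chi_{k}|$. Because $\mathrm{Tr}_{B}$ (Kraus operators $I_{A}\otimes\langle n|_{B}$) is an incoherent operation, applying it termwise produces the decompositions $\rho=\sum_{k}r_{k}\rho_{k}$, $\sigma=\sum_{k}r_{k}\sigma_{k}$ of the marginals, with $\rho_{k}=\mathrm{Tr}_{B}|\chi_{k}\rangle\langle\chi_{k}|$ and $\sigma_{k}=\mathrm{Tr}_{A}|\chi_{k}\rangle\langle\chi_{k}|$. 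Convexity (C4) then gives $\mathcal{L}_{C}(\rho)+\mathcal{L}_{C}(\sigma)\leq\sum_{k}r_{k}(\mathcal{L}_{C}(\rho_{k})+\mathcal{L}_{C}(\sigma_{k}))$, so, recalling $\mathcal{L}_{C}(\rho\otimes\sigma)=\sum_{k}r_{k}\log_{2}R_{C}(|\chi_{k}\rangle)$, it suffices to prove the single-copy estimate $\mathcal{L}_{C}(\rho_{k})+\mathcal{L}_{C}(\sigma_{k})\leq\log_{2}R_{C}(|\chi_{k}\rangle)$ for an arbitrary bipartite pure $|\chi_{k}\rangle$. Note this is an equality when $|\chi_{k}\rangle$ is a product vector, which is consistent with the ``$\leq$'' bound being attained.

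I expect this pure-state estimate to be the main obstacle. A first attempt: given $|\chi_{k}\rangle=\sum_{(i,j)\in S}c_{ij}|i\rangle|j\rangle$ with $R_{C}(|\chi_{k}\rangle)=|S|$, the decomposition of $\rho_{k}$ induced by measuring $B$ in the incoherent basis reads $\rho_{k}=\sum_{j}\kappa_{j}|u_{j}\rangle\langle u_{j}|$ with $\kappa_{j}=\sum_{i:(i,j)\in S}|c_{ij}|^{2}$ and $R_{C}(|u_{j}\rangle)=\#\{i:(i,j)\in S\}$, and symmetrically for $\sigma_{k}$ by measuring $A$. This yields $\mathcal{L}_{C}(\rho_{k})\leq\sum_{j}\kappa_{j}\log_{2}\#\{i:(i,j)\in S\}$ and $\mathcal{L}_{C}(\sigma_{k})\leq\sum_{i}\pi_{i}\log_{2}\#\{j:(i,j)\in S\}$ with $\pi_{i}=\sum_{j:(i,j)\in S}|c_{ij}|^{2}$; however the sum of these two combinatorial quantities need not be $\leq\log_{2}|S|$ (an ``L-shaped'' support with $|S|=3$ already violates it once the weight concentrates on the corner), so the measurement-induced decomposition is too lossy and must be sharpened --- the genuine marginal coherence is strictly smaller (for the L-shape it equals $2\sqrt{p_{00}p_{10}}$, whose maximum over the simplex is $\sqrt2<\log_{2}3$). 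Establishing that the sum of the two marginals' logarithmic coherence numbers can never exceed $\log_{2}R_{C}(|\chi_{k}\rangle)$ is what I regard as the hard part; I would try to prove it by induction on $|S|$, stripping off one incoherent row or column at a time while controlling the residual superposition via the bounds on the coherence rank of superpositions given above, or else by optimizing directly over marginal decompositions rather than using the measurement one.
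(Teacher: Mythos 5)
Your pure-state additivity argument and the ``$\leq$'' (subadditivity) direction are correct and coincide with what the paper does. The genuine gap is the one you flag yourself: the ``$\geq$'' direction is never established. Your reduction of it --- via convexity (C4) and an optimal decomposition $\{r_k,|\chi_k\rangle\}$ of $\rho\otimes\sigma$ --- to the single pure-state estimate $\mathcal{L}_C(\mathrm{Tr}_B|\chi_k\rangle\langle\chi_k|)+\mathcal{L}_C(\mathrm{Tr}_A|\chi_k\rangle\langle\chi_k|)\leq\log_2 R_C(|\chi_k\rangle)$ is sound, and that estimate is exactly the super-additivity relation~(\ref{eq:super-addi}) stated later in the paper, specialized to a pure bipartite state; but you do not prove it, and your own analysis of the L-shaped support shows why the measurement-induced marginal decomposition cannot deliver it. So, as written, the proposal proves only $\mathcal{L}_C(\rho\otimes\sigma)\leq\mathcal{L}_C(\rho)+\mathcal{L}_C(\sigma)$, not additivity.

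For comparison, the paper's own proof of this proposition does not supply that step either: for mixed states it asserts ``without loss of generality'' that every pure-state decomposition of $\rho\otimes\sigma$ is a product decomposition $\sum_{a,b}p_ap_b\,|\psi_a\rangle\langle\psi_a|\otimes|\phi_b\rangle\langle\phi_b|$, which is false --- a tensor product of mixed states admits correlated and even entangled pure-state decompositions (e.g.\ $\tfrac{I}{2}\otimes\tfrac{I}{2}$ decomposes into Bell states), and the infimum over all decompositions could in principle drop below the product-decomposition value. Hence the paper, too, really only establishes the ``$\leq$'' direction at this point. One natural way to close your gap would be to invoke the super-additivity result~(\ref{eq:super-addi}) applied to $\rho^{SA}=\rho\otimes\sigma$ (no circularity, since its proof does not use additivity); but note that the paper's argument for that result assumes $R_C(|\psi^{SA}\rangle)=r_S\times r_A$, i.e.\ a full rectangular support in the product basis, so it does not cover your L-shaped example either. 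In short, your reduction correctly isolates --- and is more candid about --- the hard step, but the superadditivity estimate remains unproven both in your proposal and, in effect, in the paper.
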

\begin{proof}
Let us consider the case of pure states
first. From the definition of the coherence rank, we have
\begin{equation}
R_{C}(|\psi_1\rangle\otimes|\psi_2\rangle)=R_C(|\psi_{1})\cdot R_C(|\psi_2\rangle).
\end{equation}
Thus, we obtain
\begin{align}
\mathcal{L}_{C}(|\psi_1\rangle\otimes|\psi_2\rangle)=&\log_2(R_{C}(|\psi_{1})\cdot R_{C}(|\psi_{2}\rangle)\nonumber\\
=&\mathcal{L}_{C}(|\psi_1\rangle)+\mathcal{L}_C|\psi_2\rangle).
\end{align}

Then we consider the case of mixed states. Without loss of generality,
the pure states decompositions of $\rho\otimes\sigma$ is of the form
 \begin{equation}
 \rho\otimes\sigma=\sum_ap_a|\psi_a\rangle\langle\psi_a|\otimes\sum_bp_b|\phi_b\rangle\langle\phi_b|.
 \end{equation}
Then we have
\begin{align}
\mathcal{L}_{C}(\rho\otimes\sigma)&=
\min\sum_{a,b}p_ap_b\mathcal{L}_{C}(|\psi_a\rangle\otimes|\phi_b\rangle)\nonumber\\
&=\min\sum_ap_a\mathcal{L}_{C}(|\psi_a\rangle)\otimes \min\sum_bp_b\mathcal{L}_C(|\phi_b\rangle)\nonumber\\
&=\mathcal{L}_{C}(\rho)+\mathcal{L}_{C}(\sigma).
\end{align}
This completes the proof the proposition.
\end{proof}

From this result, for $n$ copies of the same state $|\psi\rangle$, we have
\begin{equation}
\mathcal{L}_C(|\psi\rangle^{\otimes n})=n\mathcal{L}_C(|\psi\rangle).
\end{equation}
In particular, let $\delta$ be an incoherent state, we have
\begin{equation}
\mathcal{L}_C(\delta^{\otimes n}\otimes |\psi\rangle\langle\psi|^{\otimes n})=n\mathcal{L}_C(|\psi\rangle).
\end{equation}

If the states $|\psi_1\rangle$ and $|\psi_2\rangle$ satisfy $||\psi_1\rangle-|\psi_2\rangle|<\varepsilon$, we may ask whether the logarithmic coherence number also satisfies $|\mathcal{L}_{C}(|\psi_1\rangle)-\mathcal{L}_{C}(|\psi_2\rangle)|<\varepsilon$, where $|\cdot|$ is trace distance. Let $|\psi_1\rangle$ be the state
\begin{equation}
|\psi_1\rangle=\sqrt{1-\varepsilon}|0\rangle+\sqrt{\frac{\varepsilon}{d-1}}\sum_{i=1}^{d-1}|i\rangle,
\end{equation}
and $|\psi_2\rangle=|0\rangle$. When $\varepsilon\rightarrow0$, which means $|\psi_1\rangle\rightarrow|\psi_2\rangle$, but we know that $\big|\mathcal{L}_{C}(|\psi_1\rangle)-\mathcal{L}_C(|\psi_2\rangle) \big|=\log_2d$. Thus, we claim that the logarithmic coherence number is not continuous.

Although we define the coherence measure of a mixed state via a minimization over all possible realizations of the state, it can be calculated exactly for some states. In order to calculate the logarithmic coherence number of a mixed state, the minimization over decompositions of the state is necessary. The value of $\mathcal{L}_C$ can be fully evaluated for some states. We consider a family of noisy maximally coherent states
\begin{equation}
\rho_\lambda=\lambda|\psi_M\rangle\langle\psi_M|+(1-\lambda)\frac{I}{d},
\end{equation}
where $\lambda\in[0,1]$.
Without loss of generality, the identity operator $I$ can be represented with the pure states $|\psi_i\rangle$ as
\begin{equation}
I=\sum_i\alpha_i|\psi_i\rangle\langle\psi_i|,
\end{equation}
where $\alpha_i\geq0$. Then, the pure states decompositions of $\rho_\lambda$ is of the form
\begin{equation}
\rho_\lambda=\lambda|+\rangle\langle+|+
\frac{1-\lambda}{d}\sum_i\alpha_i|\psi_i\rangle\langle\psi_i|.
\end{equation}
Using the definition~(\ref{eq:LCN}), we get
\begin{equation}\label{eq:1-qubit co}
\mathcal{L}_C(\rho_\lambda)\!=\!\!\min_{\{\alpha_i,|\psi_i\rangle\}}\!
\left[\!\lambda\log_2d\!+\!\frac{1-\lambda}{d}\!\sum_{i}\alpha_i\mathcal{L}_C(|\psi_i\rangle)\!\right]\!.
\end{equation}
Minimizing the right-hand side of Eq.~(\ref{eq:1-qubit co}) over all pure states decompositions we immediately see that the minimum is achieved for every $i$, $\mathcal{L}_C(|\psi_i\rangle)=0$.
Thus, we obtain a closed expression of the logarithmic coherence number for the state $\rho_\lambda$, i.e.,
\begin{equation}
\mathcal{L}_C(\rho_\lambda)=\lambda\log_2d.
\end{equation}
\section{Multipartite scenario}\label{sec:multi scen}
Let $\mathcal{H}^S$ and $\mathcal{H}^A$ be two $d$-dimensional Hilbert spaces, and $\mathcal{H}^A$ be the Hilbert space of an ancillary system with $\mathcal{H}^S\cong\mathcal{H}^A$. Without loss of generality, we take the orthogonal basis $\{|i\rangle\}_{i=0}^{d-1}$ and $\{|j\rangle\}_{i=0}^{d-1}$ as two fixed basis on $\mathcal{H}^S$ and $\mathcal{H}^A$, respectively. Then, their tensor product $\{|i\rangle\otimes |j\rangle\}$ can be viewed as an incoherent basis for compound system $SA$. Thus, the corresponding logarithmic coherence rank and the logarithmic coherence number can be defined just as~(\ref{Def:LCR}) and~(\ref{eq:LCN}). We are particularly interested in the relationship between the total coherence and coherence contained in each individual subsystem. In the following proposition, we prove that the logarithmic coherence number in the bipartite quantum states is no less than the sum between two subsystems. This relation can be viewed as the super-additivity for the logarithm coherence number.
\begin{proposition}
For any bipartite quantum state $\rho^{SA}$ on $SA$, we have
\begin{equation}\label{eq:super-addi}
\mathcal{L}_C(\rho^{S})+\mathcal{L}_C(\rho^{A})\leq\mathcal{L}_C(\rho^{SA}),
\end{equation}
where $\rho^S$ and $\rho^A$ are reduced states on $S$ and $A$, respectively.
\end{proposition}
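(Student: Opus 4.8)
The plan is to reduce the inequality to pure states using convexity, and then to control the two marginals of a pure state via the decompositions that the product incoherent basis $\{|i\rangle\otimes|j\rangle\}$ naturally supplies. First, it suffices to prove the statement for pure $\rho^{SA}=|\psi\rangle\langle\psi|$: granting that, pick an optimal pure-state decomposition $\rho^{SA}=\sum_ip_i|\psi_i\rangle\langle\psi_i|$ realising Eq.~(\ref{eq:LCN}) and let $\rho_i^S,\rho_i^A$ be the reduced states of $|\psi_i\rangle$; then $\rho^S=\sum_ip_i\rho_i^S$, $\rho^A=\sum_ip_i\rho_i^A$, and convexity (C4) gives
\[
\mathcal{L}_C(\rho^S)+\mathcal{L}_C(\rho^A)\le\sum_ip_i\bigl[\mathcal{L}_C(\rho_i^S)+\mathcal{L}_C(\rho_i^A)\bigr]\le\sum_ip_i\mathcal{L}_C(|\psi_i\rangle)=\mathcal{L}_C(\rho^{SA}).
\]
(The individual bounds $\mathcal{L}_C(\rho^S),\mathcal{L}_C(\rho^A)\le\mathcal{L}_C(\rho^{SA})$ hold already for mixed $\rho^{SA}$, since $\mathrm{Tr}_A$ and $\mathrm{Tr}_S$ are incoherent operations and (C2) applies; the content of the proposition is the bound on the sum.)

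For the pure case, write $|\psi\rangle=\sum_{i,j}c_{ij}|i\rangle|j\rangle$ with support $\hat{\mathcal{O}}=\{(i,j):c_{ij}\ne0\}$, so $R_C(|\psi\rangle)=|\hat{\mathcal{O}}|=:r$. Grouping terms by the $A$-index gives $|\psi\rangle=\sum_j\sqrt{p_j}\,|\hat\alpha_j\rangle\otimes|j\rangle$ with $p_j=\sum_i|c_{ij}|^2$ and $|\hat\alpha_j\rangle$ the normalisation of $\sum_ic_{ij}|i\rangle$; hence $\rho^S=\sum_jp_j|\hat\alpha_j\rangle\langle\hat\alpha_j|$ is a valid decomposition with $R_C(|\hat\alpha_j\rangle)=r_j$ (the number of nonzero entries in column $j$) and $\sum_jr_j=r$, so $\mathcal{L}_C(\rho^S)\le\sum_jp_j\log_2r_j$. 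Symmetrically $\mathcal{L}_C(\rho^A)\le\sum_iq_i\log_2s_i$ with $q_i=\sum_j|c_{ij}|^2$, $s_i$ the number of nonzero entries in row $i$, and $\sum_is_i=r$.

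The proposition then follows from the combinatorial inequality $\sum_jp_j\log_2r_j+\sum_iq_i\log_2s_i\le\log_2r$, which I would attack by passing to the connected components $G_1,\dots,G_m$ of the bipartite graph on the occupied rows and columns whose edges are the elements of $\hat{\mathcal{O}}$. Distinct components occupy disjoint rows and disjoint columns, so both $\rho^S$ and $\rho^A$ split as direct sums over the $G_t$; by convexity it is enough to bound $\mathcal{L}_C(\rho^S)+\mathcal{L}_C(\rho^A)$ inside each component by $\log_2r_t$ (the components then recombine via concavity of $\log_2$ and $\sum_tr_t=r$), where $G_t$ has $r_t$ edges spanning $a_t$ rows and $b_t$ columns and connectivity forces $a_t+b_t\le r_t+1$. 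If $G_t$ is a star ($a_t=1$ or $b_t=1$) then $r_t=\max(a_t,b_t)$, one marginal is incoherent and the other is pure of coherence rank $r_t$, so equality holds; if $G_t$ is complete bipartite ($r_t=a_tb_t$) the dimension bounds $\mathcal{L}_C(\rho^S)\le\log_2a_t$, $\mathcal{L}_C(\rho^A)\le\log_2b_t$ already sum to $\log_2r_t$.

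The main obstacle is the remaining case, a connected component that is neither a star nor complete bipartite: there $2\le a_t,b_t$ and $a_t+b_t-1\le r_t<a_tb_t$, so $\log_2a_t+\log_2b_t>\log_2r_t$ and the dimension bounds over-count, while the column/row decompositions are not sharp enough either. For instance, for the three-edge path with support $\{(1,1),(2,1),(2,2)\}$ and almost all of the weight on the central edge $(2,1)$ one gets $\sum_jp_j\log_2r_j+\sum_iq_i\log_2s_i$ arbitrarily close to $2$, exceeding $\log_2 3$; and one checks that in that configuration those decompositions already realise the minima defining $\mathcal{L}_C$ of the two marginals. So a proof must exploit the amplitudes $c_{ij}$ to produce a decomposition of the marginals beating both the dimension and the column/row bounds, or else isolate the precise extremal support patterns; this extremal step is the genuine difficulty and, I expect, must be handled with care.
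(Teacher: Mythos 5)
Your reduction to pure states and the row/column decompositions of the marginals are sound, but the proof is incomplete exactly where you say it is---and the gap cannot be closed, because the missing extremal step is false and with it the proposition itself. Take your own path example, $|\psi\rangle=\sqrt{\epsilon}\,|1\rangle|1\rangle+\sqrt{1-2\epsilon}\,|2\rangle|1\rangle+\sqrt{\epsilon}\,|2\rangle|2\rangle$, so that $\mathcal{L}_C(|\psi\rangle)=\log_2 3$. Each marginal is supported on two incoherent basis states, so its $\mathcal{L}_C$ is the minimal total weight of coherent pure states in a decomposition, i.e.\ $1-\max(q_1+q_2)$ over $q_1,q_2\geq 0$ such that $\rho-q_1|1\rangle\langle 1|-q_2|2\rangle\langle 2|$ is positive semidefinite. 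Here $\rho^S$ has diagonal $(\epsilon,\,1-\epsilon)$ and off-diagonal $\sqrt{\epsilon(1-2\epsilon)}$ (and $\rho^A$ the same with the roles of the two levels exchanged), and the $2\times 2$ positivity condition $(\epsilon-q_1)(1-\epsilon-q_2)\geq\epsilon(1-2\epsilon)$ forces $q_1+q_2\leq\epsilon$ whenever $\epsilon<1/3$. Hence $\mathcal{L}_C(\rho^S)=\mathcal{L}_C(\rho^A)=1-\epsilon$, exactly the value of your column/row decompositions, confirming your suspicion that they are optimal, and
\begin{equation*}
\mathcal{L}_C(\rho^S)+\mathcal{L}_C(\rho^A)=2-2\epsilon>\log_2 3=\mathcal{L}_C(|\psi\rangle)\quad\text{for }\epsilon<1-\tfrac{1}{2}\log_2 3\approx 0.21.
\end{equation*}
So what you flagged as ``the genuine difficulty'' is in fact a counterexample: super-additivity already fails for pure states whose support in the product incoherent basis is not a product (rectangular) set, and no choice of cleverer decompositions of the marginals can rescue it.

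For comparison, the paper's own proof breaks at precisely this point: it writes the optimal expansion of $|\psi^{SA}\rangle$ as a double sum over $r_S$ rows and $r_A$ columns and asserts $R_C(|\psi^{SA}\rangle)=r_S\times r_A$, which tacitly assumes every coefficient $a_{ij}$ in that $r_S\times r_A$ block is nonzero. What the subsequent singular-value argument actually delivers is $\mathcal{L}_C(\rho^S)+\mathcal{L}_C(\rho^A)\leq\log_2 r_S+\log_2 r_A$ (your ``complete bipartite'' case), which exceeds $\log_2 R_C(|\psi^{SA}\rangle)$ whenever the support is a proper subset of the rectangle; in the example above $r_S=r_A=2$ while $R_C=3$. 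So the paper's proposition is established only for states with full product support (and convex mixtures thereof), your partial argument correctly covers the star and complete-bipartite supports, and your analysis, pushed one step further, refutes the general statement rather than leaving it open.
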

\begin{proof}
Firstly, we consider the case of pure states. Let
\begin{equation}
|\psi^{SA}\rangle = \sum_{i=0}^{r_S-1}\sum_{j=0}^{r_A-1}a_{ij}|i^S\rangle|j^A\rangle.
 \end{equation}
be the optimal decomposition of $|\psi^{SA}\rangle$ belonging to the minimum in Eq.~(\ref{Def:LCR}), it follows that
\begin{equation}\label{eq:2pure LCN}
R_C(|\psi^{SA})=r_S\times r_A.
\end{equation}
Further, the matrix $M$ of complex numbers $a_{ij}$ can be represented as
\begin{equation}
M=\left(
    \begin{array}{cc}
      N_{r_S\times r_A} & O \\
      O & O \\
    \end{array}
  \right),
\end{equation}
where $N_{r_S\times r_A}=(a_{ij})_{r_S\times r_A}$, and $O$ are zero matrices.
Using the singular value decomposition, $M=U\Sigma V$, where $\Sigma$ is a diagonal matrix with non-negative elements $\lambda_m$,
which are the singular values of $M$, and $U$ and $V$ are unitary matrices. Thus, it is always possible to write $|\psi^{SA}\rangle$ in the following way
\begin{equation}
|\psi^{SA}\rangle=\sum_{m=0}^{r-1}\lambda_m|m^S\rangle|m^A\rangle,
\end{equation}
where $r$ is the Schmidt number of the state $|\psi^{SA}\rangle$, and
\begin{equation}\label{eq:reduced form11}
|m^S\rangle=\sum_{i=0}^{r_S-1}u_{im}|i^S\rangle, |m^A\rangle=\sum_{j=0}^{r_A-1}v_{mj}|j^A\rangle.
\end{equation}
Here the complex number $u_{im}$ and $v_{mj}$ are matrix elements of unitary matrices $U$ and $V$. It is easy to see that the coherence rank of the states $|m^S\rangle$ and $|m^A\rangle$ can not exceed the numbers $r_S$ and $r_A$, respectively. This means that for every $m$ the following inequalities holds,
\begin{equation}
\mathcal{L}_C(|m^S\rangle)\leq \log_2r_S,\mathcal{L}_C(|m^A\rangle)\leq \log_2r_A.
\end{equation}
For the subsystem $S$, we know that $\rho^S=\sum_m\lambda^2_m|m\rangle^S\langle m|$ is a valid decomposition of $\rho^S$,  then we obtain
\begin{align}
\mathcal{L}_C(\rho^S)& \leq \sum_m\lambda^2_m\mathcal{L}_C(|m^S\rangle)\nonumber\\
&\leq\sum_m\lambda^2_m\log_2r_S\nonumber\\
&=\log_2r_S,
\end{align}
and similarly that
 \begin{align}
\mathcal{L}_C(\rho^A)\leq \log_2r_A.
\end{align}
The above inequalities together with Eq.~(\ref{eq:2pure LCN}) implies the following inequality,
 \begin{equation}\label{eq:pure_LCN}
\mathcal{L}_C(\rho^S)+\mathcal{L}_C(\rho^A)\leq \mathcal{L}_C(|\psi^{SA}\rangle).
\end{equation}

For any mixed state $\rho^{SA}$, let $\rho^{SA}=\sum_ip_i|\psi_i\rangle^{SA}\langle\psi_i|$ be the optimal decomposition of $|\psi^{SA}\rangle$ belonging to the minimum in Eq.~(\ref{eq:LCN}), we have
\begin{equation}\label{eq:BMS_LCN}
\mathcal{L}_C(\rho^{SA})=\sum_ip_i\mathcal{L}_C(|\psi_i^{SA}\rangle).
\end{equation}
Combining Eq.~(\ref{eq:pure_LCN}) and Eq.~(\ref{eq:BMS_LCN}), we obtain
\begin{align}
\mathcal{L}_C(\rho^{SA})&=\sum_ip_i\mathcal{L}_C(|\psi_i^{SA}\rangle)\nonumber\\
&\geq \sum_ip_i\mathcal{L}_C(\rho^S_i)+\sum_ip_i\mathcal{L}_C(\rho^A_i)\nonumber\\
&\geq \mathcal{L}_C(\rho^S)+\mathcal{L}_C(\rho^A).
\end{align}
This completes the proof of the proposition.
\end{proof}

From the proof of the proposition,
we immediately see that the Schmidt number $r$ does not exceed to the numbers $r_S$ and $r_A$, i.e.,
\begin{equation}
r\leq \min\{r_S,r_A\}.
\end{equation}
Thus, we can obtain an interesting relation between entanglement and coherence as follows,
\begin{equation}
\max\{\mathcal{L}_C(\rho^S),\mathcal{L}_C(\rho^A)\}+\mathcal{L}_E(|\psi^{SA}\rangle)\leq \mathcal{L}_C(|\psi^{SA}\rangle),
\end{equation}
 where $\mathcal{L}_E(|\psi^{SA}\rangle)$ is the Schmidt number, which is defined in~\cite{EisertPRA01}, and $\mathcal{L}_E(|\psi^{SA}\rangle)=\log_2r$. Note that the equality holds if and only if the matrix $M$ is diagonal matrix.

This relation shows that the sum between the entanglement and coherence contained in one subsystem can be not more than the total coherence.
 This relation can be generalized to the mixed states, for any bipartite mixed state $\rho^{SA}$, we have
\begin{equation}\label{eq:E_C_log12}
\max\{\mathcal{L}_C(\rho^S),\mathcal{L}_C(\rho^A)\}+\mathcal{L}_E(\rho^{SA})\leq \mathcal{L}_C(\rho^{SA}).
\end{equation}
Here, $\mathcal{L}_{E}(\rho^{SA})$ is the Schmidt number of mixed state, which is defined as
\begin{equation}
\mathcal{L}_E(\rho^{SA})=\min_{\{p_i, |\psi_i^{SA}\rangle\}}\sum_ip_i\mathcal{L}_E(|\psi_i^{SA}\rangle),
\end{equation}
where the minimum is taken over all pure state decompositions of
$\rho^{SA}=\sum_ip_i|\psi_i\rangle^{SA}\langle\psi_i|$.

In fact, our results~(\ref{eq:super-addi}) and (\ref{eq:E_C_log12}) are also generalize to the multipartite setting. Let $\rho^{SA_1\cdots A_N}$ be a $N+1$-partite states, by the repeated use of the super-additivity, we have
\begin{equation}\label{eq:E_C_log123}
\mathcal{L}_C(\rho^{S})+\sum_{i=1}^{N}\mathcal{L}_C(\rho^{A_i})\leq\mathcal{L}_C(\rho^{A_1\cdots A_N}).
\end{equation}
Combining Eq.~(\ref{eq:E_C_log12}) and Eq.~(\ref{eq:E_C_log123}), we have
\begin{equation}
\mathcal{L}_E(\rho^{S|A_1\cdots A_N})+\sum_{i=1}^{N}\mathcal{L}_C(\rho^{A_i})\leq\mathcal{L}_C(\rho^{A_1\cdots A_N}),
\end{equation}
where $\mathcal{L}_E(\rho^{S|A_1\cdots A_N})$ is the Schmidt number with the bipartite cut $S|A_1\cdots A_N$.

Finally, it is interesting to compare the logarithmic coherence number with the Schmidt number.
We consider quantum-incoherent state, which has the following form
\begin{equation}
\chi^{SA}=\sum_ip_i|i\rangle^S\langle i|\otimes \rho^A_i,
\end{equation}
where $\rho^A_i$ are arbitrary quantum states on $A$, and the states $|i^S\rangle$ belong to the local incoherent basis of $S$~\cite{Chitambarprl16}.
For any quantum-incoherent state, we easily obtain that the Schmidt number is zero, i.e.,
\begin{equation}
\mathcal{L}_E(\rho^{SA})=0.
\end{equation}
But, we can also obtain
\begin{equation}\label{eq:QI_LCN11}
\mathcal{L}_C(\chi^{SA})\leq \sum_ip_i\mathcal{L}_C(\rho^A_i).
\end{equation}
We note that the minimum in $\mathcal{L}_C(\chi^{SA})$ depends only on the pure decomposition of $\rho^A_i$, without loss of generality,
let $\chi^{SA}=\sum_{ij}p_iq_j|i\rangle^S\langle i|\otimes |\psi_{ij}\rangle^A\langle\psi_{ij}|$ be the optimal decomposition of $\chi^{SA}$ belonging to the minimum in Eq.~(\ref{eq:LCN}), we have
\begin{align}\label{eq:QI_LCN12}
\mathcal{L}_C(\chi^{SA})&=\sum_{ij}p_iq_j\mathcal{L}_C(|\psi_{ij}\rangle^A\langle\psi_{ij}|)\nonumber\\
&=\sum_{i}p_i\sum_jq_j\mathcal{L}_C(|\psi_{ij}\rangle^A\langle\psi_{ij}|)\nonumber\\
&\geq \sum_{i}p_i\mathcal{L}_C\left(\sum_jq_j|\psi_{ij}\rangle^A\langle\psi_{ij}|\right)\nonumber\\
&=\sum_{i}p_i\mathcal{L}_C(\rho^A_i).
\end{align}
Combining Eq.~(\ref{eq:QI_LCN11}) and Eq.~(\ref{eq:QI_LCN12}), we have
\begin{equation}\label{eq:QI_LCN13}
\mathcal{L}_C(\chi^{SA})=\sum_ip_i\mathcal{L}_C(\rho^A_i).
\end{equation}

\section{Converting coherence to entanglement}\label{sec:entanglement}
 In this section, using the logarithmic coherence number,
we discuss the relation between the coherence of a mixed state $\rho^S$ in an initial system $S$ with the entanglement generated from $\rho^S$ by attaching an ancilla system $A$ and taking an incoherent operation $\Lambda^{SA}$ on the bipartite system $SA$. Based on different measures, some authors have been investigated as well~\cite{StreltsovPRL15,MaPRL16,RegulaNJP18,KilloranPRL16}.
\begin{proposition}
The entanglement generated from a state $\rho^{S}$ via an incoherent operation $\Lambda^{SA}$ is bounded above by the logarithmic coherence number, i.e.,
\begin{equation}
\mathcal{L}_{C}(\rho^{S})\geq\mathcal{L}_{E}(\Lambda^{SA}(\rho^{S}\otimes |0\rangle^{A}\langle0|)).
\end{equation}
\end{proposition}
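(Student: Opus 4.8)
The plan is to chain together three facts already established in the excerpt: additivity of $\mathcal{L}_C$ (Proposition~\ref{prop:additive}), the fact that $\mathcal{L}_C$ is a coherence measure (hence non-increasing under incoherent operations), and the coherence--entanglement inequality~(\ref{eq:E_C_log12}) applied on the compound system $SA$ with the product incoherent basis $\{|i\rangle^S\otimes|j\rangle^A\}$ fixed at the start of Sec.~\ref{sec:multi scen}.

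First I would observe that attaching the incoherent ancilla $|0\rangle^A$ costs nothing. Since $R_C(|0\rangle^A)=1$ we have $\mathcal{L}_C(|0\rangle^A\langle0|)=\log_2 1=0$, so additivity gives
\begin{equation}
\mathcal{L}_C(\rho^S\otimes|0\rangle^A\langle0|)=\mathcal{L}_C(\rho^S).
\end{equation}
Next, writing $\sigma^{SA}:=\Lambda^{SA}(\rho^S\otimes|0\rangle^A\langle0|)$, I would use that $\mathcal{L}_C$ is a coherence measure on $SA$ as well --- it satisfies (C1)--(C4) by the same argument as in the single-system case --- and in particular is non-increasing under the incoherent operation $\Lambda^{SA}$, hence
\begin{equation}
\mathcal{L}_C(\sigma^{SA})\leq\mathcal{L}_C(\rho^S\otimes|0\rangle^A\langle0|)=\mathcal{L}_C(\rho^S).
\end{equation}

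Finally I would apply~(\ref{eq:E_C_log12}) to $\sigma^{SA}$. Since $\max\{\mathcal{L}_C(\sigma^S),\mathcal{L}_C(\sigma^A)\}\geq0$, dropping that nonnegative term yields $\mathcal{L}_E(\sigma^{SA})\leq\mathcal{L}_C(\sigma^{SA})$, and combining with the two displays above,
\begin{equation}
\mathcal{L}_E(\Lambda^{SA}(\rho^S\otimes|0\rangle^A\langle0|))=\mathcal{L}_E(\sigma^{SA})\leq\mathcal{L}_C(\sigma^{SA})\leq\mathcal{L}_C(\rho^S),
\end{equation}
which is exactly the claimed bound. The argument is essentially bookkeeping once the pieces are lined up; the only point that deserves care --- and which I would treat as the potential obstacle --- is verifying that the monotonicity of $\mathcal{L}_C$ genuinely transfers to the bipartite setting, i.e. that a ``bipartite incoherent operation'' $\Lambda^{SA}$ is incoherent with respect to the product basis $\{|i\rangle^S\otimes|j\rangle^A\}$, so that the proof of (C1)--(C4), and in particular of (C3) via the fact that an incoherent Kraus operator cannot raise the coherence rank, runs verbatim on $SA$. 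Since that product basis is precisely the incoherent basis fixed in Sec.~\ref{sec:multi scen}, this presents no real difficulty.
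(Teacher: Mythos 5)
Your proof is correct and follows essentially the same route as the paper: attach the cost-free incoherent ancilla via additivity, use monotonicity of $\mathcal{L}_C$ under the bipartite incoherent operation (with respect to the product incoherent basis, exactly as you note), and then bound the entanglement of the output state by its coherence. The only cosmetic difference is that you obtain $\mathcal{L}_E(\sigma^{SA})\leq\mathcal{L}_C(\sigma^{SA})$ by dropping the nonnegative max term in Eq.~(\ref{eq:E_C_log12}), whereas the paper re-derives this on the spot from an optimal pure-state decomposition together with the fact that the coherence rank of a bipartite pure state is at least its Schmidt rank --- the same fact that underlies Eq.~(\ref{eq:E_C_log12}) in the first place.
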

\begin{proof}
Let $|0\rangle\langle0|^{A}$ be an incoherent state on $A$, then we have
 \begin{align}
 \mathcal{L}_{C}(\rho^{S})&=\mathcal{L}_{C}(\rho^{S}\otimes |0\rangle\langle0|^{A})\nonumber\\
&\geq \mathcal{L}_{C}(\Lambda^{SA}(\rho^{S}\otimes |0\rangle\langle0|^{A}))\nonumber\\
&=\sum_k\lambda_k\mathcal{L}_C(|\phi\rangle^{SA})\nonumber\\
&\geq \sum_k\lambda_k\mathcal{L}_E(|\phi\rangle^{SA})\nonumber\\
&= \mathcal{L}_{E}(\Lambda^{SA}(\rho^{S}\otimes |0\rangle\langle0|^{A}),
 \end{align}
 where the second equality comes from the fact that $\Lambda^{SA}(\rho^{S}\otimes |0\rangle\langle0|^{A})=\sum_k\lambda_k|\phi_k\rangle^{SA}\langle\phi_k|$ is an optimal pure states decomposition of $\Lambda^{SA}(\rho^{S}\otimes |0\rangle\langle0|^{A})$ belonging to the minimum in Eq.~(\ref{eq:LCN}), and the second inequality depends on the fact that the coherence rank is greater than or equal to the Schmidt rank.
\end{proof}

From the results in~\cite{KilloranPRL16,ChinPRA17,StreltsovPRL15,RegulaNJP18}, we know that a unitary operation which makes the coherence rank and the Schmidt number equal is given by
\begin{equation}\label{eq:unitary11}
U=\sum_{i=0}^{d-1}\sum_{j=i}^{d-1}|i\rangle^S\langle i|\otimes |i\oplus(j-1)\rangle^A\langle j|,
\end{equation}
where $\oplus$ mens an addition modula $d$. Let
\begin{equation}
|\psi^S\rangle=\sum_{i}\lambda_i|i^S\rangle
\end{equation}
be a pure state on $S$. The unitary operation maps the state $|\psi^S\rangle\otimes |0^A\rangle$ to the state
\begin{equation}
U(|\psi^S\rangle\otimes |0^A\rangle)=\sum_i\lambda_i|i^S\rangle|i^A\rangle.
\end{equation}
Then we easily obtain
\begin{equation}\label{eq:E_C_log11}
\mathcal{L}_{C}(|\psi^S\rangle)=\mathcal{L}_{E}(U(|\psi^S\rangle\otimes |0^A\rangle)).
\end{equation}
Similar to the result~\cite{RegulaNJP18}, we can extend it to the general case of mixed states as follows.
\begin{proposition}
There exists an isometry $W:\mathcal{H}^S\rightarrow \mathcal{H}^S\otimes \mathcal{H}^A$ such that for any state $\rho^S$ on $S$, we have
\begin{equation}
\mathcal{L}_{C}(\rho^S)=\mathcal{L}_{E}(W\rho^SW^\dagger).
\end{equation}
\end{proposition}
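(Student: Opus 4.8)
The plan is to take the isometry that is implicitly built around Eq.~(\ref{eq:unitary11}) and verify it works for all mixed states. Concretely, define $W:=U\,(I^{S}\otimes|0\rangle^{A})$, where $U$ is the unitary incoherent operation of Eq.~(\ref{eq:unitary11}). Since $U^{\dagger}U=I^{SA}$, one checks immediately that $W^{\dagger}W=(I^{S}\otimes\langle0|^{A})(I^{S}\otimes|0\rangle^{A})=I^{S}$, so $W$ is an isometry $\mathcal{H}^{S}\to\mathcal{H}^{S}\otimes\mathcal{H}^{A}$, and by construction $W|\psi^{S}\rangle=U(|\psi^{S}\rangle\otimes|0^{A}\rangle)$ for every pure state $|\psi^{S}\rangle$. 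Equation~(\ref{eq:E_C_log11}) then becomes, in this notation, the pure-state identity $\mathcal{L}_{C}(|\psi^{S}\rangle)=\mathcal{L}_{E}(W|\psi^{S}\rangle)$, valid for \emph{all} pure $|\psi^{S}\rangle$ on $S$. The remaining task is to lift this identity to mixed states via a two-sided estimate using optimal convex-roof decompositions on each side.

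For the inequality $\mathcal{L}_{E}(W\rho^{S}W^{\dagger})\le\mathcal{L}_{C}(\rho^{S})$, I would take an optimal decomposition $\rho^{S}=\sum_{i}p_{i}|\psi_{i}\rangle\langle\psi_{i}|$ attaining the minimum in Eq.~(\ref{eq:LCN}). Conjugating by $W$ gives the pure-state decomposition $W\rho^{S}W^{\dagger}=\sum_{i}p_{i}\,W|\psi_{i}\rangle\langle\psi_{i}|W^{\dagger}$, which is admissible in the convex-roof definition of $\mathcal{L}_{E}$ for mixed states. Hence $\mathcal{L}_{E}(W\rho^{S}W^{\dagger})\le\sum_{i}p_{i}\mathcal{L}_{E}(W|\psi_{i}\rangle)=\sum_{i}p_{i}\mathcal{L}_{C}(|\psi_{i}\rangle)=\mathcal{L}_{C}(\rho^{S})$, using the pure-state identity above.

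For the reverse inequality $\mathcal{L}_{C}(\rho^{S})\le\mathcal{L}_{E}(W\rho^{S}W^{\dagger})$, I would start from an optimal decomposition $W\rho^{S}W^{\dagger}=\sum_{k}q_{k}|\phi_{k}\rangle\langle\phi_{k}|$ attaining the minimum in the definition of $\mathcal{L}_{E}$. The key observation is that each $|\phi_{k}\rangle$ with $q_{k}>0$ lies in the support of $W\rho^{S}W^{\dagger}$, which is contained in the range of the isometry $W$; therefore $|\phi_{k}\rangle=W|\psi_{k}\rangle$ with $|\psi_{k}\rangle:=W^{\dagger}|\phi_{k}\rangle$ a unit vector of $\mathcal{H}^{S}$ (unit because $W$ is an isometry). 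Applying $W^{\dagger}(\cdot)\,W$ to the decomposition and using $W^{\dagger}W=I^{S}$ yields $\rho^{S}=\sum_{k}q_{k}|\psi_{k}\rangle\langle\psi_{k}|$, a valid decomposition of $\rho^{S}$, so $\mathcal{L}_{C}(\rho^{S})\le\sum_{k}q_{k}\mathcal{L}_{C}(|\psi_{k}\rangle)=\sum_{k}q_{k}\mathcal{L}_{E}(W|\psi_{k}\rangle)=\sum_{k}q_{k}\mathcal{L}_{E}(|\phi_{k}\rangle)=\mathcal{L}_{E}(W\rho^{S}W^{\dagger})$. Combining the two inequalities gives the claimed equality.

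The only delicate point, and the one I would state most carefully, is the support argument in the second step: every pure state occurring with positive weight in a convex decomposition of a density operator lies in that operator's support, hence (for $W\rho^{S}W^{\dagger}$) in $\mathrm{range}(W)$, so that $W^{\dagger}$ genuinely inverts $W$ on exactly these vectors. Everything else---the isometry property of $W$ and the invocation of Eq.~(\ref{eq:E_C_log11})---is routine. As elsewhere in the paper, one should also note that the minima defining $\mathcal{L}_{C}$ and $\mathcal{L}_{E}$ are attained (each being an average of a function taking only finitely many values on pure states), so that speaking of ``optimal decompositions'' is legitimate; I would not expect any further obstruction.
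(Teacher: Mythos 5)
Your proposal is correct and takes essentially the same route as the paper: the isometry obtained from the unitary of Eq.~(\ref{eq:unitary11}) acting on $|\psi^S\rangle\otimes|0^A\rangle$, combined with the pure-state identity of Eq.~(\ref{eq:E_C_log11}), lifted to mixed states through the correspondence between pure-state decompositions of $\rho^S$ and of $W\rho^S W^\dagger$. The only difference is that you explicitly justify that correspondence (the two-sided estimate with the support-in-range argument), whereas the paper simply asserts the one-to-one correspondence; this is added rigor rather than a different method.
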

\begin{proof}
Let $\{|i\rangle\}$ be an orthonormal basis and $|a\rangle$ be any state in $\mathcal{H}^A$ , one can define
\begin{equation}\label{eq:W_def}
W=\sum_iK_i\otimes|i\rangle\langle 0|.
\end{equation}
Then we have $W^\dagger W=I\otimes |0\rangle\langle 0|$. Note that there exists a unitary operation $U$ such that $W=U(I\otimes|0\rangle\langle 0|)$.
In particular, we take the unitary operation given in~(\ref{eq:unitary11}).
Let $\rho=\sum_i\lambda^*_i|\psi_i^*\rangle\langle\psi_i^*|$ be a decompositions for which the minima in Eq.~(\ref{eq:LCN}) is attained.
Since the operation $I\otimes|0\rangle\langle 0|$ does not effect the Schmidt number, for any state $|\psi_i^*\rangle$, using Eq.~(\ref{eq:E_C_log11}), then we have
\begin{equation}
\mathcal{L}_{C}(|\psi_i^*\rangle)=\mathcal{L}_{E}(W|\psi_i^*\rangle).
\end{equation}
We know that there exists a one-to-one correspondence between the pure states decompositions of $\rho$ and the decompositions of $\rho^\prime=W\rho W^\dagger$ for given $W$, then we obtain $\{\lambda^*_i, W|\psi_i^*\rangle\}$ will form an optimal pure-state decomposition of $\rho^\prime$, and
\begin{align}
\mathcal{L}_{C}(\rho)&=\sum_i\lambda^*_i\mathcal{L}_{C}(|\psi_i^*\rangle)\nonumber\\
&=\sum_i\lambda^*_i\mathcal{L}_{E}(W|\psi_i^*\rangle)\nonumber\\
&=\mathcal{L}_{E}(W\rho^SW^\dagger).
\end{align}
This completes the proof of the proposition.
\end{proof}

\section{Conclusions}\label{sec:conclusion}
We have introduced a new coherence measure of coherence, the logarithmic coherence number, which is generalized from the Schmidt measure and coherence rank. We have shown that the logarithmic coherence number is a proper coherence measure. We have also proved the logarithmic coherence number is additive but not continuous. In particular, we have found that the logarithmic coherence number is computable for a large class of states. We have shown that the logarithmic coherence number satisfies the super-additivity, and obtained the relationship between coherence and entanglement via our presented measures. The results can be also extended to multipartite setting. We have shown that the creation of entanglement
with bipartite incoherent operations is bounded by the logarithmic coherence number of the initial system during the process. Some interesting results are given. We hope this measure of coherence will improve the understanding of quantum resource theory.
\section{Acknowledgments}
Z. Xi is supported by the National Natural Science Foundation of China (Grant Nos. 61671280,11531009, and 11771009), and by the Natural
Science Basic Research Plan in Shaanxi Province of China
(No. 2017KJXX-92),
and by the Funded Projects for the Academic Leaders and Academic Backbones, Shannxi Normal University (16QNGG013).

\end{document}